\newtheorem{thm}{Theorem}[section]
\newtheorem{cor}[thm]{Corollary}
\theoremstyle{definition}
\theoremstyle{remark}
\numberwithin{equation}{section}
\newcommand{\av}[1]{\left\langle{#1}\right\rangle}
\newcommand{\mbold}[1]{\mbox{\boldmath${#1}$}}
\renewcommand{\d}{\mathrm{d}}
\def\beq{\begin{eqnarray}}
\def\eeq{\end{eqnarray}}
\def\beqs{\begin{equation}\begin{split}}
\def\eeqs{\end{split}\end{equation}}
\begin{document}

\title{Late-time behaviour of the tilted Bianchi type VI$_{-1/9}$ models}
 \author[S Hervik, R J van den Hoogen, W C Lim and A A Coley]{S Hervik$^{1}$, R J van den Hoogen$^{2, 1}$, W C Lim$^{3}$ and A A Coley$^{1}$ }%
\address{$^{1}$ Department of Mathematics \& Statistics, Dalhousie University,
Halifax, Nova Scotia,
Canada B3H 3J5}%
\address{$^2$ Department of Mathematics, Statistics and Computer Science,
St. Francis Xavier University, Antigonish, Nova Scotia, Canada B2G 2W5}
\address{$^3$ Department of Physics, Princeton University, Princeton NJ 08544, USA.}

\email{ herviks@mathstat.dal.ca, rvandenh@stfx.ca, wlim@princeton.edu, aac@mathstat.dal.ca}%
\date{\today}%
% ----------------------------------------------------
\begin{abstract}
We study tilted perfect fluid cosmological models with a constant
equation of state parameter in spatially homogeneous models of
Bianchi type VI$_{-1/9}$ using dynamical systems methods and numerical
simulations. We study models with and without vorticity, with an emphasis on their future asymptotic
evolution. We show that for models with vorticity there exists, 
in a small region of parameter space, a closed curve acting as the attractor.   
\end{abstract}

\maketitle

\section{Introduction} 

In recent papers the tilted Bianchi models
\cite{HBWII,HWV,Harnett,BHtilted,hervik,CH1,CH2,HHC,HHLC,HLim,HHLC2} have been studied using 
dynamical systems methods \cite{DS1,DS2}.  In this paper we will study the 
general tilted\footnote{The non-tilted Bianchi model of type VI$_{-1/9}$ was studied in \cite{HHW-19}.} perfect fluid
Bianchi model of type VI$_{-1/9}$, with a constant linear equation of state parameter 
$\gamma$ satisfying the causality conditions (i.e., no superluminal speed of sound), 
which was not studied in detail in \cite{HHLC}.  The
tilted models have shown a wide variety of phenomena including, for example, the existence
of closed curves
\cite{HHC} and incompleteness of the fluid congruence \cite{CollinsEllis,tilt1}.

The general irrotational perfect fluid type VI$_{-1/9}$ model is the most general  of
the irrotational Bianchi models.  In fact, the irrotational type VI$_{-1/9}$ model has
a 6-dimensional state space; just one dimension less than that of of the most general tilted type
VI$_{-1/9}$ model.  We will show that, regardless of whether we consider the general type
VI$_{-1/9}$ model with or without vorticity, the important late time asymptotes are the
non-tilted Collins type VI$_{-1/9}$ solution (for $2/3<\gamma<10/9$), \[ \d s^2=-\d
t^2+t^2\d x^2+t^{\frac{2(4-3\gamma)}{3\gamma}}e^{-\frac{2sx}{3\gamma}}\d
y^2+t^{\frac{4}{3\gamma}}e^{\frac{4sx}{3\gamma}}\d z^2, \] where
$s=\sqrt{(2-\gamma)(3\gamma-2)}$, or the Collinson-French (or Robinson-Trautman) vacuum
solution (for $10/9<\gamma<2$), \[ \d s^2=-\d t^2+t^2\d x^2+t^{\frac
25}\left(e^{-\frac{\sqrt{6}}{5}x}\d y+\frac{\sqrt{5}}2t^{\frac 45}\d
x\right)^2+t^{\frac 65}e^{\frac{4\sqrt{6}}{5}x}\d z^2.\] The tilt of the fluid can be
asymptotically non-tilted, intermediate or extreme depending on the value of $\gamma$.
The Collinson-French solution is a Petrov type III vacuum solution, and hence it
has the peculiar feature that all the curvature invariants of zeroth order
vanish\footnote{However, considering differential invariants they will not vanish; for
example, $C_{\alpha\beta\gamma\delta;\mu}C^{\alpha\beta\gamma\delta;\mu}\neq 0$.}.

The spatial hypersurfaces in a Bianchi type VI$_{-1/9}$ cosmology are defined as the
orbit of the action of the Bianchi type VI$_{-1/9}$ Lie group.  This Lie group can be
equipped with the left-invariant one-forms:  \[ \widetilde{\mbold\omega}^1=\d x, \quad
\widetilde{\mbold\omega}^2=e^{-bx}\d y,\quad \widetilde{\mbold\omega}^3=e^{2bx}\d z,\]
where $b$ is a constant.  The exceptional feature of this model does not relate to the
group itself (for which there is nothing extraordinary) but arises from the Einstein field
equations.  For this particular model one of the Einstein constraint equations vanishes
exactly and hence the vacuum case allows for an extra shear degree of freedom.  This
extra shear is present in the Collinson-French vacuum solution and therefore this metric 
has no
analogue in the other Bianchi vacuum cosmologies.

The aim of this paper is to fill one of the gaps in the analysis of the general 
tilted type VI$_h$ models presented in
\cite{HHLC}.  The $h=-1/9$ case is special (as explained above) and requires a separate
analysis, which has only been done in part previously.  In particular,  we will show
that closed periodic orbits exist for the general vortic model.  However, for the
irrotational models, no such closed curve exists.  Moreover, we will also
complete a centre manifold analysis appearing in one 6-dimensional invariant subspace
describing vortic Bianchi type VI$_{-1/9}$ models.  All other attractors will be
included for completeness.

\section{Equations of motion}
\subsection{The orthonormal frame approach}
The line-element of a Bianchi cosmology can be written
\beq
\d s^2=-\d t^2+\delta_{ab}{\mbold\omega}^a{\mbold\omega}^b,
\eeq 
where $t$ is the co-moving cosmological time. The one-forms ${\mbold\omega}^a$ are left-invariant one-forms on the hypersurfaces spanned by the group orbits and can be related to the left-invariant one-forms, $\widetilde{\mbold\omega}^i$, given above by ${\mbold\omega}^a={\sf e}^a_{~i}(t)\widetilde{\mbold\omega}^i$. 
%They obey the commutator relation
%\beq
%\left(\d{\mbold\omega}^a\right)_{\perp}=-\frac 12C^a_{~bc}{\mbold\omega}^b\wedge{\mbold\omega}^c, \quad C^a_{~bc}=\epsilon_{bcd}n^{da}+a_{b}\delta^a_{~c}-a_{c}\delta^a_{~b},
%\eeq
%where $n^{ab}$ is a symmetric tensor, $a_c=-(1/2)C^a_{~ac}$ and $\perp$ means projection onto the spatial hypersurfaces. Furthermore, the Jacobi identity implies $n^{bc}a_c=0$. 

%Since $n^{ab}$ is a symmetric spatial tensor we can characterise it in terms of its
%eigenvalues, $(n_1,n_2,n_3)$.  For the Bianchi type VI$_h$ models we have $a_c\neq 0$,
%and hence, one of the eigenvalues of $n^{ab}$ is necessarily zero:  $n_1=0$ (say).
%Moreover, for the Bianchi type VI$_h$ models we have the relation $a_ca^c=hn_2n_3$,
%which defines the group parameter $h$.  We will be interested here in the case $h=-1/9$.

The geometric (or normal) congruence, $n^{\mu}$, is given by ${\bf n}=\partial/\partial t$. It is also useful to define and the Hubble scalar and the shear associated with the congruence $n^\mu$: 
\beq
H\equiv\frac 13n^\mu_{~;\mu}, \quad \sigma_{\mu\nu}\equiv n_{\mu;\nu}-Hh_{\mu\nu}. 
\eeq
The matter variables are chosen to be the energy density, $\mu$, and the tilt-velocity, $v^a$, which is defined as the 3-velocity of the fluid with respect to the geometric (or normal) congruence, $n^{\mu}$. 
The equations of motion can now be written down in terms of the Hubble scalar, $H;$ the shear, $\sigma_{ab}$; the curvature variables $n^{ab}$ and $a_c$; and the matter variables $\mu$ and $v^a$.  

Expansion-normalised variables are introduced  (by dividing the variables with the appropriate powers of $H$).   
The papers \cite{CH2,HHLC2} contain all the details regarding the  determination of the evolution 
equations for the tilted cosmological models under consideration. 
%These equations, written in {\em gauge invariant} form, allow one to choose 
%the gauge that is best suited to the application at hand. 
Here, we shall adopt the so-called $N$-gauge in which the function 
${\bf N}_{\times}$ is purely imaginary; this is realised by the choice
$\phi'=\sqrt{3}\lambda\Sigma_-$, where $\lambda$ is defined by
$\bar{N}=\lambda\mathrm{Im}({\bf N}_{\times})$. The evolution equation for
$\bar{N}$ can then be replaced by an evolution equation for $\lambda$,
which ensures a closed system of equations.  
%For a qualitative analysis of these models, the $N$-gauge
%is preferable since the resulting dynamical system is well defined in 
%the Bianchi VI$_h$ case; in particular, $|\lambda|<1$. 
%In the notation of \cite{CH2}, the expansion-normalised anisotropy and curvature variables used in this paper are:
%\[ {\mbold\Sigma}_1=\Sigma_{12}+i\Sigma_{13}, \quad {\mbold\Sigma}_{\times}=\Sigma_-+i\Sigma_{23}, \quad {\bf N}_{\times}=iN. \]
We will also adopt the dimensionless time parameter $\tau$, which is related to the cosmological time $t$ via $\mathrm{d} t/\mathrm{d}\tau =(1/H)$, where $H$ is the Hubble scalar. 

Using expansion-normalised variables, the equations of motion 
are (see \cite{CH2,HHLC2} for the complete derivation of the equations): 
\beq 
\Sigma_+'&=& (q-2)\Sigma_++{3}(\Sigma_{12}^2+\Sigma^2_{13})-2N^2 +\frac{\gamma\Omega}{2G_+}\left(-2v_1^2+v_2^2+v_3^2\right) \\
\Sigma_-'&=&(q-2-2\sqrt{3}\Sigma_{23}\lambda)\Sigma_-+\sqrt{3}(\Sigma_{12}^2-\Sigma_{13}^2) +2AN +\frac{\sqrt{3}\gamma\Omega}{2G_+}\left(v_2^2-v_3^2\right)\\
\Sigma'_{12}&=& \left(q-2-3\Sigma_+-\sqrt{3}\Sigma_-\right)\Sigma_{12} -\sqrt{3}\left(\Sigma_{23}+\Sigma_-\lambda\right)\Sigma_{13} +\frac{\sqrt{3}\gamma\Omega}{G_+}v_1v_2\\
\Sigma'_{13}&=&\left(q-2-3\Sigma_++\sqrt{3}\Sigma_-\right)\Sigma_{13}-\sqrt{3}\left(\Sigma_{23}-\Sigma_-\lambda\right)\Sigma_{12}+\frac{\sqrt{3}\gamma\Omega}{G_+}v_1v_3\\
\Sigma'_{23}&=&(q-2)\Sigma_{23}-2\sqrt{3}N^2\lambda+2\sqrt{3}\lambda\Sigma_-^2+2\sqrt{3}\Sigma_{12}\Sigma_{13}+ \frac{\sqrt{3}\gamma\Omega}{G_+}v_2v_3\\
N'&=& \left(q+2\Sigma_++2\sqrt{3}\Sigma_{23}\lambda\right){N}\\
\lambda' &=& 2\sqrt{3}\Sigma_{23}\left(1-\lambda^2\right)\\
A'&=& (q+2\Sigma_+)A . 
\eeq 
The equations for the fluid are:
\beq
\Omega'&=& \frac{\Omega}{G_+}\Big\{2q-(3\gamma-2)+2\gamma Av_1
 +\left[2q(\gamma-1)-(2-\gamma)-\gamma\mathcal{S}\right]V^2\Big\}
 \quad \\
 v_1' &=& \left(T+2\Sigma_+\right)v_1-2\sqrt{3}\Sigma_{13}v_3-2\sqrt{3}\Sigma_{12}v_2-A\left(v_2^2+v_3^2\right)-\sqrt{3}N\left(v_2^2-v_3^2\right)\\
 v_2'&=& \left(T-\Sigma_+-\sqrt{3}\Sigma_-\right)v_2-\sqrt{3}\left(\Sigma_{23}+\Sigma_-\lambda\right)v_3+\sqrt{3}\lambda{N}v_1v_3+\left(A+\sqrt{3}N\right)v_1v_2 \\
 v_3'&=& \left(T-\Sigma_++\sqrt{3}\Sigma_-\right)v_3-\sqrt{3}\left(\Sigma_{23}-\Sigma_-\lambda\right)v_2-\sqrt{3}\lambda{N}v_1v_2+\left(A-\sqrt{3}N\right)v_1v_3 \\
 V'&=&\frac{V(1-V^2)}{1-(\gamma-1)V^2}\left[(3\gamma-4)-2(\gamma-1)Av_1-\mathcal{S}\right],
\eeq 
where 
\beq q&=& 2\Sigma^2+\frac
12\frac{(3\gamma-2)+(2-\gamma)V^2}{1+(\gamma-1)V^2}\Omega\nonumber \\
\Sigma^2 &=& \Sigma_+^2+\Sigma_-^2+\Sigma_{12}^2+ \Sigma_{13}^2+\Sigma_{23}^2\nonumber \\
\mathcal{S} &=& \Sigma_{ab}c^ac^b, \quad c^ac_{a}=1, \quad v^a=Vc^a,\quad \nonumber \\
 V^2 &=& v_1^2+v_2^2+v_3^2,\quad  \nonumber \\
 T&=& \frac{\left[(3\gamma-4)-2(\gamma-1)Av_1\right](1-V^2)+(2-\gamma)V^2\mathcal{S}}{1-(\gamma-1)V^2}\nonumber\\
 G_+&=&1+(\gamma-1)V^2\nonumber.
\eeq 
These variables are subject to the constraints 
\beq
1&=& \Sigma^2+A^2+N^2+\Omega \label{const:H}\\
0 &=& 2\Sigma_+A+2\Sigma_-N+\frac{\gamma\Omega v_1}{G_+} \label{const:v1}\\
0 &=&
-\left[\Sigma_{12}(N+\sqrt{3}A)+\Sigma_{13}\lambda{N}\right]+\frac{\gamma\Omega v_2}{G_+} \label{const:v2}\\
0 &=&
\left[\Sigma_{13}(N-\sqrt{3}A)+\Sigma_{12}\lambda{N}\right]+\frac{\gamma\Omega v_3}{G_+} \label{const:v3} \\
%0&=& NR_1-\sqrt{3}\Sigma_-\bar{N},\label{const:R} \\
0&=& 3A^2-\left(1-\lambda^2\right)N^2.\label{const:group} 
\eeq 
The parameter $\gamma$ will be assumed to satisfy $\gamma\in (0,2)$. 
The generalized Friedmann equation (\ref{const:H}) yields an expression which effectively defines the energy density $\Omega$. We will assume that this energy density is non-negative: $\Omega\geq 0$. 
Therefore, the state vector can be considered 
${\sf X}=[\Sigma_+,\Sigma_-,\Sigma_{12},\Sigma_{13},\Sigma_{23},N,\lambda,A,v_1,v_2,v_3]$ 
modulo the constraint equations  (\ref{const:v1})-(\ref{const:group}). 
Thus the dimension of the physical state space is seven 
(and hence of equal generality to the other models of type VI$_h$ for a given value of $h$).  
Additional details are presented in \cite{CH2}.

The dynamical system is invariant under the following discrete symmetries :
$$\begin{tabular}{l}
$\phi_1:~[\Sigma_+,\Sigma_-,\Sigma_{12},\Sigma_{13},\Sigma_{23},N,\lambda,A,v_1,v_2,v_3] 
\mapsto  [\Sigma_+,\Sigma_-,\Sigma_{12},\Sigma_{13},\Sigma_{23},-N,\lambda,-A,-v_1,-v_2,-v_3] $ \\
$\phi_2:~[\Sigma_+,\Sigma_-,\Sigma_{12},\Sigma_{13},\Sigma_{23},N,\lambda,A,v_1,v_2,v_3] 
\mapsto [\Sigma_+,-\Sigma_-,\Sigma_{13},\Sigma_{12},\Sigma_{23},-N,\lambda,A,v_1,v_3,v_2] $ \\
$\phi_3^{\pm}\! :~[\Sigma_+,\Sigma_-,\Sigma_{12},\Sigma_{13},\Sigma_{23},N,\lambda,A,v_1,v_2,v_3]
\mapsto [\Sigma_+,\Sigma_-,\pm \Sigma_{12},\mp \Sigma_{13},-\Sigma_{23},N,-\lambda,A,v_1,\pm v_2,\mp v_3]$\\
$\phi_4:~[\Sigma_+,\Sigma_-,\Sigma_{12},\Sigma_{13},\Sigma_{23},N,\lambda,A,v_1,v_2,v_3]
\mapsto[\Sigma_+,\Sigma_-,-\Sigma_{12},-\Sigma_{13},\Sigma_{23},N,\lambda,A,v_1,-v_2,-v_3]$
\end{tabular}$$
These discrete symmetries imply that without loss of  generality we can restrict the 
variables $A\geq0$ and $N\geq0$,  since the dynamics in the other regions can be 
obtained by simply applying one or more of  the maps above.  The third and fourth symmetries listed imply that one can add  additional constraints on the variables $\Sigma_{12},\Sigma_{13},v_2$ or $v_3$. For the type VI$_{-1/9}$ we can use $\phi_4$ to restrict $v_2$ to be non-negative (since $v_2=0$ implies $v_3=0$, see below); hence, we will assume $v_2\geq 0$. There is no natural way to restrict any of the remaining variables using the symmetry $\phi^+_3$, and so we will not do so here.  

\subsection{Invariant sets}
For the case $h=-1/9$ we define ($N_{ab}v^av^b$ is identically zero for $h=-1/9$)
\beq 
\widehat{D}=\left[\lambda(\Sigma_{12}^2+\Sigma_{13}^2)+2\Sigma_{12}\Sigma_{13}\right]. 
\label{Dhat}\eeq
In this analysis we will be concerned with the following invariant sets: 
\begin{enumerate}
\item{} $T(VI_{-1/9})$: The general tilted type VI$_{-1/9}$  model: $ |\lambda|<1$. 
\item{} $T_1(VI_{-1/9})$: A one-tilted type VI$_{-1/9}$ model: $|\lambda|<1$, $v_2=v_3=\Sigma_{12}=\Sigma_{13}=0$. 
\item{} $T_{1,0}(VI_{-1/9})$: A one-tilted diagonal type VI$_{-1/9}$ model: $v_2=v_3=\Sigma_{12}=\Sigma_{13}=\Sigma_{23}=\lambda=0$.
\item{} $N^{\pm}(VI_{-1/9})$: A class of tilted type VI$_{-1/9}$ models:  $|\lambda|<1$, $\widehat{D}=0$. 
\item{} $T_2^+(VI_{-1/9})$: A two-tilted type VI$_{-1/9}$ model. This is the fixed-point-set of $\phi_3^+$ and is given by  $v_3=\Sigma_{13}=\Sigma_{23}=\lambda=0$.
\item{} $T_2^-(VI_{-1/9})$: A one-tilted irrotational type VI$_{-1/9}$ model. This is the fixed-point-set of $\phi^-_3$ and is given by $v_2=v_3=\Sigma_{12}=\Sigma_{23}=\lambda=0$.
\item{} $B(VI_{-1/9})$: Non-tilted type VI$_{-1/9}$: $|\lambda|<1$, $v_1=v_2=v_3=\Sigma_{12}=\Sigma_{13}=0$.
\item{} $B_0(VI_{-1/9})$: A class of diagonal non-tilted type VI$_{-1/9}$ models ($n^{\alpha}_{~\alpha}=0$): $V=\Sigma_{12}=\Sigma_{13}=\Sigma_{23}=\lambda=0$
\item{} $T(II)$: The general type II model: $\lambda=\pm 1$, $A=0$. 
\item{}$B(I)$: Type I: ${N}=A=V=0$.
\item{}{$\partial T(I)$}: ``Tilted'' vacuum type I: $\Omega=N=A=0$.
\end{enumerate}
Regarding $N^{\pm}(VI_{-1/9})$, to verify that this is indeed an invariant subspace, we calculate $\widehat{D}'$:
\[ \widehat{D}'=2\left(q-2-3\Sigma_+-\sqrt{3}\lambda\Sigma_{23}+3Av_1\right) \widehat{D};\] 
hence, $\widehat{D}=0$ defines an invariant subspace. 
Note also that this invariant set is not a manifold; it is similar to the light-cone in 2-dimensional Minkowski space. Therefore, $N^{\pm}(VI_{-1/9})-T_1(VI_{-1/9})$ consists of 4 disconnected pieces. By the symmetry $\phi_4$, these are actually only two inequivalent pieces. Here, we choose $N^{\pm}(VI_{-1/9})$ such that 
\[ T_2^+(VI_{-1/9})\subset N^+(VI_{-1/9}), \quad T_2^-(VI_{-1/9})\subset N^-(VI_{-1/9}) \]  
Since $N^+(VI_{-1/9})\cap N^-(VI_{-1/9})=T_1(VI_{-1/9})$, both $N^+(VI_{-1/9})$ and $N^-(VI_{-1/9})$ are invariant sets. 

We note that the closure of the set $T(VI_{-1/9})$ is given by
\beq
\overline{T(VI_{-1/9})}&=&T(VI_{-1/9})\cup T(II)\cup B(I) \cup \partial T(I).
\label{eq:decomp}\eeq
Since the boundaries may play an important role in the evolution of the dynamical system  we must consider all of the sets in the decomposition (\ref{eq:decomp}).

Let us consider the constraint equations (\ref{const:v2}) and (\ref{const:v3}) as a
linear map \[ {\sf L}:~(\Sigma_{12},\Sigma_{13})\mapsto (v_2,v_3)/G_+,\] where ${\sf
L}$ is considered as given in terms of $A$, $N$, $\lambda$ and $\Omega$.  For $h\neq
-1/9$, $\det({\sf L})\neq 0$ and the image of ${\sf L}$ is 2-dimensional.  However, for
$h= -1/9$, $\det({\sf L})= 0$ and the image of ${\sf L}$ is 1-dimensional; hence, in
this sense \emph{the constraint equations are degenerate}.  This implies that
$(v_2,v_3)$ has to be restricted to a 1-dimensional submanifold.  We will therefore say
that the general type VI$_{-1/9}$ model only allows for 2 tilt degrees of freedom.  In
particular, we can solve for $v_3$ and obtain \[ v_3=-\frac{\lambda
v_2}{1+\sqrt{1-\lambda^2}}.\]

It is illustrative to consider the eigenvectors of the map ${\sf L}$:
\beq
{\bf x}_0=\left(
-\frac{\lambda}{1+\sqrt{1-\lambda^2}}, ~ 1
\right),&& \quad {\sf L}{\bf x}_0=0, \nonumber \\
{\bf x}_a=\left(
1,~ -\frac{\lambda}{1+\sqrt{1-\lambda^2}} 
\right),&& \quad {\sf L}{\bf x}_a=a{\bf x}_a, \quad a>0. \nonumber
\eeq
For each of these eigenvectors, $\widehat{D}=0$ and hence, alternatively, we can define $N^{\pm}(VI_{-1/9})$  when $(\Sigma_{12},\Sigma_{13})$ is proportional to one of these eigenvectors. More specifically, for $N^+(VI_{-1/9})$, $(\Sigma_{12},\Sigma_{13})\propto {\bf x}_a$, while for $N^-(VI_{-1/9})$, $(\Sigma_{12},\Sigma_{13})\propto {\bf x}_0$. 
 
Note also that, due to the presence of the eigenvector with zero eigenvalue ${\bf x}_0$, $(v_2,v_3)=0$ does not necessarily mean that $(\Sigma_{12},\Sigma_{13})$ is zero. In particular, for the non-tilted models this implies that we may have an additional shear degree of freedom; these models have usually been called the exceptional case and are denoted by an asterisk; e.g., $B_0(VI^*_{-1/9})$ and  $B(VI^*_{-1/9})$. We also note that for the tilted models, there is an exceptional case of the one-tilted models $T_1(VI_{-1/9})$ which could be denoted $T_1(VI^*_{-1/9})$. However, $T_1(VI^*_{-1/9})=N^-(VI_{-1/9})$ as explained above. Therefore, we keep the notation $N^-(VI_{-1/9})$. Similarly, we have $T_{1,0}(VI^*_{-1/9})=T_2^-(VI_{-1/9})$.

\begin{table}
\caption{The dimensions of the invariant sets for the Bianchi type $VI_{ -1/9}$
model. The right-most column indicates the specialization in terms of the $G_2$ cosmologies. Here, Diag means diagonal, HO means hypersurface orthogonal, and OT means orthogonally transitive. The stars indicate the exceptional cases where the models aquire an addition shear degree of freedom.}
\begin{tabular}{|c||cc|c|c|}
\hline
Dim &Invariant set & &No of Tilt & $G_2$ action \\
\hline 
2 & $B_0(VI_{-1/9})$ &&0  & Diag \\ \hline
3 & $T_{1,0}(VI_{-1/9})$ & &1 & Diag \\ 
  & $B_0(VI^*_{-1/9})$& $\star$ &0 & HO \\
\hline
4 & $B(VI_{-1/9})$ & &0  & OT \\
  & $T_2^+(VI_{-1/9})$ & &2  & HO \\
  & $T_2^-(VI_{-1/9})$ & $\star$ &1  & HO \\
\hline
5 & $T_1(VI_{-1/9})$ & &1  & OT \\
  & $B(VI^*_{-1/9})$& $\star$ &0  & Gen $G_2$ \\
\hline 
6 & $N^+(VI_{-1/9})$ & & 2  &Gen $G_2$ \\
  & $N^-(VI_{-1/9})$ &$\star$ & 1  &Gen $G_2$ \\
\hline
7 & $T(VI_{-1/9})$ & &2 & Gen $G_2$ \\
\hline 
\end{tabular}\\
\end{table}
\subsection{Fluid Vorticity} 
The various invariant subspaces can also be categorised in terms of
the ($H_{\mathrm{fluid}}$-normalised, where $H_{\mathrm{fluid}}\equiv (1/3)u^{\mu}_{~;\mu}$) fluid vorticity, $W^{\alpha}$. The vorticity of the fluid for the type VI$_{-1/9}$ models is given by:
\beq
W_a=\frac{1}{2B}\left(N_{ab}+\varepsilon_{abc}A^c\right)v^b, \quad W_0=0, 
\label{eq:vorticity}\eeq
where 
\[ B\equiv\frac{1-\frac 13(V^2+V^2\mathcal{S}+2A_av^a)}{[1-(\gamma-1)V^2]\sqrt{1-V^2}}.\]
For the invariant sets: 
\begin{enumerate}
\item{} $T(VI_{-1/9})$: $W^0=W^1=0$, most general vortic type VI$_{-1/9}$.
\item{} $N^+(VI_{-1/9})$: $W^0=W^1=0$.
\item{} $N^-(VI_{-1/9})$: $W^0=W^a=0$, non-vortic.
\item{} $T_2^+(VI_{-1/9})$: $W^0=W^1=W^2=0$.
\item{} $T_2^-(VI_{-1/9})$: $W^0=W^a=0$, non-vortic.
\item{} $T_1(VI_{-1/9})$: $W^0=W^a=0$, non-vortic.
\item{} $B(VI_{-1/9})$: $W^0=W^a=0$, non-tilted and non-vortic.
\end{enumerate}
We note that the most general non-vortic model is of dimension 6. Hence, since the non-vortic type VI$_h$ and VII$_h$ models -- regarding $h$ as fixed -- are of dimension 5, the type VI$_{-1/9}$ model is the most general non-vortic model of all Bianchi models. 

In general we can also solve for the vorticity component $W^2$: 
\[ W^2=\frac{\lambda}{1+\sqrt{1-\lambda^2}}W^3.\] 
This follows from the constraint equations and eq.(\ref{eq:vorticity}).

\section{Qualitative behaviour}
\label{sect:Qual}
\subsection{Monotone functions} 
There are a number of monotone functions in the state space of interest. For $0<\gamma\leq 6/7$, there exists a monotonically increasing function $Z_1$ defined by
\beq
Z_1&\equiv& \alpha\Omega^{1-\gamma}, \quad \alpha=\frac{(1-V^2)^{\frac 12(2-\gamma)}}{G^{1-\gamma}_+V^{\gamma}_{\phantom{+}}}, \\
Z_1'&=&\left[2(1-\gamma)q+(2-\gamma)+{\gamma}\mathcal{S}\right]Z_1.\nonumber 
\eeq
This function can be used to show \cite{CH2}:
\begin{thm}\label{thm:nontilted}
For $0<\gamma\leq 6/7$, all tilted Bianchi models  (with $\Omega>0$, $V<1$) of solvable type are asymptotically non-tilted at late times. 
\end{thm}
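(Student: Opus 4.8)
The plan is to exploit the monotone function $Z_1$ directly, combining its strict growth along interior orbits with a comparison of its level at the boundary. Write $S$ for the interior set of tilted states, $S=\{0<V<1,\ \Omega>0\}$, which is invariant because $V=0$, $V=1$ and $\Omega=0$ are each invariant (the right-hand sides of the $V'$ and $\Omega'$ equations carry explicit factors of $V(1-V^2)$ and $\Omega$ respectively). The generalized Friedmann constraint (\ref{const:H}) bounds $\Sigma^2,A^2,N^2,\Omega\le 1$, and together with $\lambda\in[-1,1]$, $V\in[0,1]$ this makes the closed state space compact; hence every orbit has a nonempty, compact, invariant $\omega$-limit set, and it suffices to show $\omega({\sf X})\subseteq\{V=0\}$.

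The crux is to verify that the bracket $g:=2(1-\gamma)q+(2-\gamma)+\gamma\mathcal{S}$ multiplying $Z_1$ in the expression for $Z_1'$ is strictly positive on $S$. Since $\gamma\le 6/7<1$ the coefficient $2(1-\gamma)$ is positive, so the only danger comes from a negative $q$ (possible when $3\gamma-2<0$) and from the sign-indefinite term $\gamma\mathcal{S}$. I would control $\mathcal{S}$ using that $\Sigma_{ab}$ is symmetric and trace-free and $c^ac_a=1$: its extreme eigenvalues straddle zero, and the identity $\Sigma_{ab}\Sigma^{ab}=6\Sigma^2$ together with $\sum_i p_i=0$ gives the sharp bound $\mathcal{S}\ge -2\Sigma$, while (\ref{const:H}) gives $\Sigma\le 1$. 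For $q$ I would use its explicit form, noting that the rational factor $[(3\gamma-2)+(2-\gamma)V^2]/[1+(\gamma-1)V^2]$ is increasing in $V^2$ and hence bounded below by its value $3\gamma-2$ at $V=0$; with $\Omega\le 1$ this yields $q\ge 2\Sigma^2+\tfrac12\min(3\gamma-2,0)$. Substituting these bounds reduces $g$ to a quadratic in $\Sigma\in[0,1]$ whose minimum is nonnegative precisely when $\gamma\le 6/7$, the binding configuration being the vacuum corner $\Sigma=1$, $\Omega=0$ where the bound reads $6-7\gamma$; in the interior $S$ the inequality is strict. This is the step I expect to be the main obstacle, and the emergence of the exact threshold $6/7$ is the signal that the estimate is sharp.

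With $g>0$ on $S$ the function $Z_1$ is strictly increasing along every interior orbit, so $Z_1({\sf X}(\tau))$ increases from the positive value $Z_1({\sf X}(0))$. The finishing argument is a level comparison at the boundary: from the formula for $\alpha$ one has $Z_1\to 0$ as $V\to 1$ or as $\Omega\to 0$ (using $1-\gamma>0$ and $G_+\ge\gamma>0$), whereas $Z_1\to+\infty$ as $V\to 0$ with $\Omega$ bounded away from $0$. Suppose $\omega({\sf X})$ contained a point $\bar{\sf X}$ with $V(\bar{\sf X})>0$. If $\bar{\sf X}\in S$, then invariance of $\omega({\sf X})$ forces $Z_1$ to be constant along the interior orbit through $\bar{\sf X}$, contradicting strict monotonicity; if instead $\bar{\sf X}\in\{V=1\}\cup(\{\Omega=0\}\cap\{V>0\})$, then continuity gives $Z_1\big({\sf X}(\tau_n)\big)\to 0$ along some $\tau_n\to\infty$, contradicting $Z_1\ge Z_1({\sf X}(0))>0$. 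Hence $\omega({\sf X})\subseteq\{V=0\}$, i.e. the model is asymptotically non-tilted, as claimed. Note that this conclusion requires neither the finiteness nor the divergence of $\lim_\tau Z_1$, only the sign information above.
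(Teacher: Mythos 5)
Your proposal is correct and follows exactly the route the paper intends: the paper itself gives no proof beyond exhibiting the monotone function $Z_1$ and citing [CH2], and your argument simply supplies the missing details (the verification that the bracket $2(1-\gamma)q+(2-\gamma)+\gamma\mathcal{S}$ is nonnegative for $\gamma\le 6/7$, with the sharp case at the vacuum boundary $\Sigma=1$, plus the boundary level comparison for $Z_1$). The estimates $\mathcal{S}\ge-2\Sigma$ and $q\ge 2\Sigma^2+\tfrac12\min(3\gamma-2,0)$ check out, so this is a faithful reconstruction rather than a different method.
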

\begin{cor}[Cosmic no-hair]
For $\Omega>0$, $V<1$, and $0<\gamma<2/3$ we have that
\[ \lim_{\tau\rightarrow \infty}\Omega=1, \quad \lim_{\tau\rightarrow \infty}V=0.\]
\label{no-hair}
\end{cor}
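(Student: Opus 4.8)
The plan is to separate the two limits and to recognise that, since $(0,2/3)\subset(0,6/7]$, Theorem \ref{thm:nontilted} applies directly: every model with $\Omega>0$, $V<1$ is asymptotically non-tilted, i.e. $V\to 0$ as $\tau\to\infty$. This already establishes the second limit. Because the state space is compact -- the Friedmann constraint (\ref{const:H}) forces $\Sigma^2,A^2,N^2,\Omega\in[0,1]$ and $V\in[0,1]$ -- the $\omega$-limit set of each such orbit is a nonempty, compact, connected, invariant set, and by the above it lies in the non-tilted invariant subspace $\{V=0\}$ (invariant since $V'\propto V$). The remaining task, carrying all the content for the range $0<\gamma<2/3$, is to show that this $\omega$-limit set is the single flat Friedmann point $\Sigma^2=A^2=N^2=0$, $\Omega=1$.

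I would first analyse the restricted flow on $\{V=0\}$. There $v_1=0$, $G_+=1$ and $q=2\Sigma^2+\tfrac12(3\gamma-2)\Omega$, so the evolution equation for $\Omega$ collapses to $\Omega'=\Omega[2q-(3\gamma-2)]$. Eliminating $\Omega-1=-(\Sigma^2+A^2+N^2)$ by means of (\ref{const:H}) rewrites this as
\[ \Omega'=\Omega\left[3(2-\gamma)\Sigma^2+(2-3\gamma)(A^2+N^2)\right]. \]
For $0<\gamma<2/3$ both coefficients are strictly positive, so $\Omega$ is a monotonically increasing Lyapunov function on $\{V=0,\ \Omega>0\}$ whose derivative vanishes only at the flat Friedmann point; the coefficient $2-3\gamma$ is precisely what degenerates at $\gamma=2/3$, which explains why the Corollary requires the strict inequality. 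By the monotonicity principle \cite{DS1,DS2} the only future attractor of the restricted flow with $\Omega>0$ is this flat Friedmann point, the vacuum boundary $\{\Omega=0\}$ playing the role of past attractor, so every compact invariant subset of $\{V=0,\ \Omega>0\}$ reduces to that point.

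The main obstacle is twofold: excluding the possibility that the orbit instead limits onto the vacuum boundary $\{\Omega=0\}$, and transferring the monotone structure of the exact boundary $\{V=0\}$ to the genuinely tilted orbit, which only approaches it. I would address both through the full $\Omega$-equation. Expanding $\Omega'/\Omega$ and using $V\to0$ together with the boundedness of all the remaining variables (again from (\ref{const:H})), every tilt-dependent correction is $O(V^2)+O(|v_1|)=o(1)$ as $\tau\to\infty$, so asymptotically $\Omega'/\Omega\ge(2-3\gamma)(1-\Omega)-\epsilon(\tau)$ with $\epsilon(\tau)\to0$. When $\Omega$ is small this lower bound is eventually positive, so $\Omega$ grows whenever it becomes small; a first-exit (comparison) argument then yields $\liminf_{\tau\to\infty}\Omega>0$, ruling out the vacuum boundary. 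With $\Omega$ bounded away from $0$, the same inequality applied to $y=1-\Omega$ reads $y'\le-\kappa y+\epsilon(\tau)$ for some $\kappa>0$, and a Gronwall estimate forces $y\to0$, i.e. $\Omega\to1$. Combined with $V\to0$ this completes the proof. The delicate point throughout is the uniform control of the tilt corrections, and this is exactly where the hypotheses $V<1$ (keeping $G_+$ and the fluid terms regular) and $\gamma<2/3$ (fixing the signs in the displayed identity) are essential.
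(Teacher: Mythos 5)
Your argument is correct, and it is more explicit than anything the paper actually provides: the paper states Corollary \ref{no-hair} with no proof at all, presenting it as an immediate consequence of Theorem \ref{thm:nontilted} (itself deferred to \cite{CH2} via the monotone function $Z_1$). Your route agrees with the paper's implicit logic in its first step --- invoke Theorem \ref{thm:nontilted} on $(0,2/3)\subset(0,6/7]$ to get $V\to 0$ --- but then, instead of appealing to $Z_1$ or to the standard non-tilted cosmic no-hair result from \cite{DS1}, you supply a direct Lyapunov/comparison argument: the identity $\Omega'=\Omega\left[3(2-\gamma)\Sigma^2+(2-3\gamma)(A^2+N^2)\right]$ on $\{V=0\}$ checks out against the stated $q$ and the constraint (\ref{const:H}), and the perturbed inequality $\Omega'/\Omega\geq (2-3\gamma)(1-\Omega)-\epsilon(\tau)$ on the full space is also correct (note that $-(3\gamma-2)=2-3\gamma$ makes the $4\Sigma^2$ term pure surplus, and $G_+\to 1$, $|v_1|\leq V\to 0$ with $A,q,\mathcal{S}$ bounded control the corrections). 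The first-exit step ruling out the vacuum boundary and the Gronwall step forcing $\Omega\to 1$ are both sound. What your approach buys is a self-contained proof that does not lean on the monotonicity principle applied to $Z_1$ or on external references; its only mild inefficiency is that the separate analysis of the restricted flow on $\{V=0\}$ is logically subsumed by your uniform estimate on the full space, so that paragraph serves as motivation rather than as a necessary step.
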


Moreover, the following function is a monotone function in $T(VI_{-1/9})$:
\beq
Z_2&=&\frac{A^4N^2G_+^5\widehat{D}^2}{(1-V^2)^{\frac 52(2-\gamma)}\Omega^5}, \\
Z_2'&=&(5\gamma-6)(3-2Av_1)Z_2, 
\eeq
where $\widehat{D}$ is defined in eq.(\ref{Dhat}). 
This function is monotonically decreasing for $\gamma<6/5$ and monotonically increasing for $6/5<\gamma$. 

We note that in the subspace $N^-(VI_{-1/9})$ we have the monotone function: 
\beq
Z_3&=&\frac{v_1^2\Omega}{A^2G_+(1-V^2)^{\frac 12(2-\gamma)}}, \\
Z_3'&=&-(2-\gamma)(3-2Av_1)Z_3.
\eeq
This function is monotonically decreasing in $N^-(VI_{-1/9})$. 

The monotonic function $Z_3$ immediately implies:
\begin{thm}[Future behaviour in $N^-(VI_{-1/9})$]
For $2/3<\gamma<2$, $\Omega>0$, $A>0$, $v_1^2<1$, $v_2=v_3=0$ we have that:  
\[ \text{either}\quad \lim_{\tau\rightarrow \infty}\Omega=0, \quad
\text{or}\quad  \lim_{\tau\rightarrow \infty}V=0.\]
\end{thm}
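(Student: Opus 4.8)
The plan is to extract everything from the monotone function $Z_3$, together with the elementary bounds that the Friedmann constraint (\ref{const:H}) places on the state space. Since we work in $N^-(VI_{-1/9})$ we have $v_2=v_3=0$, so $V^2=v_1^2$ and in particular $|v_1|=V<1$. If $V\equiv 0$ the conclusion $\lim V=0$ is immediate, so assume $V>0$; as $\{V=0\}$ is invariant this persists, and then $Z_3>0$ throughout, so $\ln Z_3$ is well defined. The first observation is that the logarithmic derivative of $Z_3$ is not merely negative but bounded away from zero. Indeed $A^2\le\Sigma^2+A^2+N^2+\Omega=1$ gives $A\le 1$, and with $|v_1|=V<1$ this yields $3-2Av_1\ge 3-2A|v_1|>1$. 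Because $2-\gamma>0$ for $\gamma<2$, the evolution equation $Z_3'=-(2-\gamma)(3-2Av_1)Z_3$ gives $(\ln Z_3)'=-(2-\gamma)(3-2Av_1)\le-(2-\gamma)<0$.

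Integrating this inequality is the key step: it shows $Z_3(\tau)\le Z_3(0)\,e^{-(2-\gamma)\tau}$, so $Z_3\to 0$ exponentially, the rate being independent of the detailed behaviour of $A$ and $V$. The next step is to convert the decay of $Z_3$ into decay of a physically meaningful product. Using $A\le 1$, the bound $0<G_+=1+(\gamma-1)V^2\le\max(1,\gamma)$ coming from $\gamma\in(0,2)$ and $V^2<1$, and $0<(1-V^2)^{\frac12(2-\gamma)}\le 1$, I would read off from the definition of $Z_3$ that $0\le V^2\Omega=A^2G_+(1-V^2)^{\frac12(2-\gamma)}Z_3\le\max(1,\gamma)\,Z_3\to 0$. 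Hence $V^2\Omega\to 0$ as $\tau\to\infty$.

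Finally I would convert $V^2\Omega\to 0$ into the stated dichotomy. Since the full state space is compact, the orbit has a nonempty, compact, connected, invariant $\omega$-limit set $\Lambda$, on which the continuous function $V^2\Omega$ vanishes identically; thus $\Lambda\subseteq\{V=0\}\cup\{\Omega=0\}$, and both $\{V=0\}$ and $\{\Omega=0\}$ are invariant (the equations for $V'$ and $\Omega'$ each carry an overall factor $V$, respectively $\Omega$). The main obstacle is precisely this last step: $V^2\Omega\to 0$ by itself permits $V$ and $\Omega$ to trade off in antiphase, and since $\{V=0\}$ and $\{\Omega=0\}$ meet along $\{V=0,\Omega=0\}$, connectedness of $\Lambda$ does not immediately force it into a single one of the two pieces. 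To close the argument cleanly I would note that if $\Omega$ is eventually bounded below by some $\delta>0$ then $V^2\le V^2\Omega/\delta\to 0$ forces $V\to 0$, and symmetrically; the remaining possibility, that neither $V$ nor $\Omega$ is eventually bounded below while both retain positive $\limsup$, has to be excluded by analysing the induced flows on the invariant boundary sets $\{\Omega=0\}$ and $\{V=0\}$ and showing that a single orbit cannot accumulate on both. This boundary analysis, rather than the monotonicity estimate, is where the real work lies.
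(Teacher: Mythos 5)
Your argument follows the same route as the paper. In fact the paper's entire proof is the single sentence that ``the monotonic function $Z_3$ immediately implies'' the theorem, and your first two steps are a correct and careful unpacking of what that assertion requires: the Friedmann constraint gives $A\le 1$, which together with $|v_1|=V<1$ yields $3-2Av_1>1$ and hence $(\ln Z_3)'\le -(2-\gamma)<0$, so $Z_3\to 0$ exponentially; and the elementary bounds $A\le 1$, $G_+\le\max(1,\gamma)$ and $(1-V^2)^{\frac12(2-\gamma)}\le 1$ then convert this into $v_1^2\Omega=V^2\Omega\to 0$. (Your implicit use of the invariance of $\{V=0\}$, $\{\Omega=0\}$ and $\{A=0\}$ to keep $Z_3$ positive and finite along the orbit is also fine.)

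The point at which you stop is the genuine subtlety, and you have located it correctly: $V^2\Omega\to 0$ only forces the $\omega$-limit set into $\{V=0\}\cup\{\Omega=0\}$, and since these two invariant sets meet along $\{V=0,\ \Omega=0\}$, connectedness of the $\omega$-limit set does not by itself exclude an orbit that accumulates on both pieces (shadowing a cycle through the common boundary). The paper does not supply this step either --- it is compressed into the word ``immediately'' --- and closing it does require the kind of boundary analysis you describe, using the equilibrium points of Section 3 and their stability on the vacuum and non-tilted boundaries to rule out such a heteroclinic recurrence. So your proposal reproduces, and makes precise, everything the paper actually writes down; the step you leave open is exactly the step the paper glosses over, and if you want a complete argument you should carry out that exclusion rather than treat the dichotomy as a formal consequence of $V^2\Omega\to 0$.
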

This implies that all irrotational  type VI$_{-1/9}$ universes are either asymptotically vacuum or non-tilted at late times. 
\subsection{Equilibrium points} 

\subsubsection{$B(I)$: Equilibrium points of Bianchi type I}  
\begin{enumerate} 
\item{}$\mathcal{I}(I)$: $\Sigma_+=\Sigma_-=\Sigma_{12}=\Sigma_{13}=\Sigma_{23}=A=N=V=0$ and $\Omega=1$. Here, $|\lambda|<1$ and is an unphysical parameter.  This represents the flat Friedman-Lema{\^i}tre model. 
%\paragraph{Eigenvalues:} 
%\[ -\frac{3(2-\gamma)}{2} [\times 5],~ \frac{3\gamma-2}{2} [\times 2].\] 
\end{enumerate}
The remaining equilibrium points are all in $\partial T(I)$.

\subsubsection{$T(II)$: Equilibrium points of Bianchi type II} 
All of the tilted equilibrium points come in pairs. These represent identical solutions (they differ by a frame rotation); however, since their embeddings in the full state space are inequivalent, two of their eigenvalues are different. All of these equilibrium points have an unstable direction with eigenvalue $-2\sqrt{3}\Sigma_{23}$ corresponding to the variable $A$. 
These equilibrium points are given in \cite{HHC}. 

\subsubsection{$T(VI_{-1/9})$:  Equilibrium points of Bianchi type VI$_{-1/9}$}
\begin{enumerate} \item{} $\mathcal{C}(-1/9)$:  Collins perfect fluid solution,
$2/3<\gamma<5/3$ \\ $\Sigma_{12}=\Sigma_{13}=\Sigma_{23}=\lambda=V=0$, $\Sigma_+=-\frac
1{4}(3\gamma-2)$, $\Sigma_-=\frac{\sqrt{3}}{12}(3\gamma-2)$, $N^2=\frac
3{16}(3\gamma-2)(2-\gamma)$, $A=N/\sqrt{3}$, $\Omega=\frac 13(5-3\gamma)$.  This
equilibrium point is in $B(VI_{-1/9})$.  \item{} $\mathcal{R}^+(-1/9)$:  The
Apostolopoulos $h=-1/9$ solution \cite{Apo2}, $4/3<\gamma<3/2$\\ This is a vortic
solution lying in the invariant subspace $T^+(VI_{-1/9})$.  The solution is given in
terms of the expansion-normalised variables in \cite{HHLC2} with $h=-1/9$, $k=1/3$.
\item{} Bianchi type VI$_{-1/9}$ vacuum plane waves.  All of these solutions have \[
\Omega=\Sigma_{12}=\Sigma_{13}=\Sigma_{23}=0, ~\Sigma_-=N=\sqrt{-\Sigma_+(1+\Sigma_+)},
~A=(1+\Sigma_+),~ -1<\Sigma_+<-3/4,~ |\lambda|<1.\] It is avantageous to introduce
$r\equiv\sqrt{1-\lambda^2}$, which implies that we can write \[
\Sigma_+=-\frac{3}{3+r^2}, \quad 0<r\leq 1.  \] We will also define $\rho$ by \[
\rho=v_2^2+v_3^2.\]

The equilibrium points are then determined by the tilt velocities:  \begin{enumerate}
\item{} $\mathcal{L}(-1/9)$:  $v_1=v_2=v_3=0$.  These represent 'non-tilted' plane
waves and lie in $B(VI_{-1/9})$.  \item{} $\widetilde{\mathcal{L}}(-1/9)$:
$v_1=\frac{3\gamma(3+r^2)-2(9+2r^2)}{2r^2(\gamma-1)}$, $v_2=v_3=0$,
$\frac{6(3+r^2)}{9+5r^2}<\gamma<2$.  These represent 'intermediately tilted' plane
waves and lie in $T_1(VI_{-1/9})$.  \item{} $\widetilde{\mathcal{L}}_{\pm}(-1/9)$:
$v_1=\pm 1$, $v_2=v_3=0$.  These represent 'extremely tilted' plane waves and lie in
$T_1(VI_{-1/9})$.  \item{} \label{defF} $\widetilde{\mathcal{F}}^+(h)$:  Here,
$(9+7r^2)/[3(3+r^2)]\leq \gamma\leq 3(3+r^2)/(9-r^2)$ and \beq
v_1&=&-\frac{3\gamma(3+r^2)-(9+7r^2)}{2r^2(3-\gamma)},\quad v_2^2-v_3^2= \rho r
\nonumber \\
\rho&=&\frac{(9+r^2)\left[5-3\gamma\right]\left[3\gamma(3+r^2)-(9+7r^2)\right]}{8r^4(3-\gamma)^2}
\nonumber \eeq These represent 'intermediately tilted' plane waves and lie in
$N^+(VI_{-1/9})$ (for $\lambda=0$ they lie in $T_2^+(VI_{-1/9})$).  \item{}
$\widetilde{\mathcal{E}}_p^+(-1/9)$, $0<\gamma<2$:  \beq
v_1&=&-\frac{4r^2}{3(3-r^2)},\quad v_2^2-v_3^2=\rho r \nonumber \\ \rho&=&
1-v_1^2=\frac{(9+r^2)(9-7r^2)}{9(3-r^2)^2} \nonumber \eeq These represent 'extremely
tilted' plane waves and lie in $N^+(VI_{-1/9})$.  For $\lambda=0$ these equilibrium
points lie in $T_2^+(VI_{-1/9})$, and due to the special importance for the late-time
behaviour, we will denote $\widetilde{\mathcal{E}}_{p0}^+(-1/9)\equiv
\left.\widetilde{\mathcal{E}}_p^+(-1/9)\right|_{\lambda=0}$.  Therefore, \[
\widetilde{\mathcal{E}}_{p0}^+(-1/9):  ~ v_1=-\frac 23, \quad v_2=\frac{\sqrt{5}}{3}.\]
\end{enumerate} \item{} $\mathcal{CF}$:  The Collinson-French (Robinson-Trautmann)
solution is given by:  \beq &&\Sigma_+=-\frac 13,\quad \Sigma_-=\frac{1}{3\sqrt{3}},
\quad \Sigma_{13}=\frac{\sqrt{15}}{9},\quad N=\frac{1}{\sqrt{2}}, \quad
A=\frac{1}{\sqrt{6}},\nonumber \\ && \Sigma_{12}=\Sigma_{23}=\Omega=\lambda=0.
\nonumber \eeq There are the following equilibrium points associated with the
Collinson-French solution:  \begin{enumerate} \item{}$\mathcal{CF}_0$:  $v_1=v_2=0$,
$0<\gamma<2$.  \item{}$\widetilde{\mathcal{CF}}_{1+}$:
$v_1=-\frac{\sqrt{6}(3\gamma-4)}{2(3-\gamma)}$,
$v_2=\frac{\sqrt{5(3\gamma-4)(3-2\gamma)}}{\sqrt{2}(3-\gamma)}$, $\frac 43<\gamma<\frac
32$.  \item{} $\widetilde{\mathcal{CF}}_{2}$:
$v_1=\frac{\sqrt{6}(9\gamma-14)}{6(\gamma-1)}$, $v_2=0$,
$\frac{24-\sqrt{6}}{15}<\gamma<\frac{24+\sqrt{6}}{15}$.
\item{}$\widetilde{\mathcal{ECF}}_{\pm}$:  $v_1=\pm 1$, $v_2=0$, $0<\gamma<2$.
\end{enumerate} These equilibrium points, and their stability, were studied in
\cite{CH2}.

\item{} $\mathcal{W}$: Wainwright $\gamma=10/9$ solution:
\beq
&&\Sigma_+=-\frac 13,\quad \Sigma_-=\frac{1}{3\sqrt{3}},\quad 0<\Sigma_{13}<\frac{\sqrt{15}}{9},\quad
N=\frac{1}{6}\sqrt{8+54\Sigma_{13}^2}, \quad A=\frac{1}{\sqrt{3}}N,\quad \Omega=\frac 59-3\Sigma_{13}^2,\nonumber \\
&& \Sigma_{12}=\Sigma_{23}=\lambda=V=0.  \nonumber
\eeq 
\end{enumerate}

\section{Late-time behaviour} 

The late-time behaviour of models with $0<\gamma<2/3$ is determined by
(the Cosmic no-hair) Corollary 3.2.

\subsection{The invariant subspace $N^+(VI_{-1/9})$}
Here, we have the following late-time attractors:
\begin{itemize}
\item{} $2/3<\gamma\leq 4/3$: The Collins solution, $\mathcal{C}(-1/9)$. 
\item{} $4/3<\gamma<3/2$: The Apostolopoulos solution, $\mathcal{R}^+(-1/9)$. 
\item{} $3/2\leq \gamma<2$: "Extremely tilted" vacuum plane waves, $\widetilde{\mathcal{E}}^+_{p0}({-1/9})$ 
\end{itemize}
The stability of these points for $\gamma<3/2$ follows from the eigenvalues of the linearised matrix. For $3/2\leq \gamma<2$  several zero-eigenvalues occur and a centre manifold analysis is needed to determine 
the late-time behaviour. 
\subsubsection{The case $3/2 <\gamma<2$: the centre manifold}   Let us present the 
centre manifold analysis of the equilibrium point $\widetilde{\mathcal{E}}^+_{p0}({-1/9})$  in some detail.
The centre manifold in this case is a 2-dimensional submanifold of the 5-dimensional extremely tilted invariant subspace $\left.N^+(VI_{-1/9})\right|_{V=1}$. To find the centre manifold, we will therefore set $V=1$. Let us choose variables 
\beq
(\Sigma_+,\Sigma_{23},N,\lambda,v_2)=\left(-\frac 34+x_1,x_2,\frac{\sqrt{3}}{4}+x_3,x_4,\frac{\sqrt{5}}3+x_5\right),
\eeq
let $\Sigma_-$, $\Sigma_{12}$, $\Sigma_{13}$ and $\Omega$ be determined from the constraint equations, and $v_1^2=1-v_2^2-v_3^2$. Let us define the column vector ${\sf x}=[x_1,x_2,x_3,x_4,x_5]^T$. We can now expand the equations of motion to 2nd order in ${\sf x}$:
\beq
{\sf x}'={\sf A}{\sf x}+{\sf C}({\sf x},{\sf x})+\mathcal{O}({\sf x}^3),
\eeq
where ${\sf C}(-,-)$ is a bilinear vector-valued function. 
It is convenient to align the vector ${\sf x}$ with the Jordan canonical form of ${\sf A}$. This can be accomplished by defining 
\beq
{\sf P}=\begin{bmatrix} 
\frac 75 & -\frac 25 & 0 & 0 & 0 \\
0 & 1 & 0 & 0 & 1 \\
\frac{2\sqrt{3}}5 & -\frac{2\sqrt{3}}5 & 0 & 0 & 0 \\
4\sqrt{3} & -4\sqrt{3} & 0 & 4\sqrt{3} & -4\sqrt{3} \\
-\frac{48\sqrt{5}}{25} & \frac{64\sqrt{5}}{45} & \frac{112\sqrt{5}}{225} & 0 & 0
\end{bmatrix}.
\eeq
Now, defining ${\sf y}={\sf P}^{-1}{\sf x}$, the equation for ${\sf y}$ becomes
\[ 
{\sf y}'={\sf J}{\sf y}+\tilde{{\sf C}}({\sf y},{\sf y})+\mathcal{O}({\sf y}^3),
\]
where ${\sf J}$ is the Jordan block matrix 
\[ {\sf J}=\mathrm{diag}\left(0,-\frac 12, -\frac 56, 0, -\frac 12\right).\]
The centre manifold correspond to the zero-eigenvalues of ${\sf J}$. 
We can therefore parameterise the centre manifold using the variables $(y_1,y_4)$. 
The next step is to expand the variables $y_i$, $i=2,3,5$, in terms of $(y_1,y_4)$ on 
the centre manifold, to second order. We therefore define the quadratic forms $Y_i(y_1,y_4)$, $i=2,3,5$, such that $y_i-Y_i(y_1,y_4)=0$ is (to second order) an invariant subspace. On the centre manifold we then have: 
\beq
y_i=Y_i(y_1,y_4)+\mathcal{O}({\sf y}^3). 
\eeq
By substituting these into the $y_1$ and $y_4$ evolution equations, we finally get, on the centre manifold: 
\beq
y_1'&=&-16y_1^2+\mathcal{O}({\sf y}^3), \nonumber \\
y_4'&=&-16y_1y_4+\mathcal{O}({\sf y}^3).
\eeq
To lowest order, these equations can be solved to give $y_1\approx 1/(16\tau)$,
$y_4\approx C/\tau$, where a constant of integration has been eliminated by a
translation of time.  Note that, in principle, there are essentially two kinds of behaviour:  for
$\tau<0$ the variables diverge, while for $\tau>0$ the variables decay.
However, requiring $\Omega>0$ leaves only the decaying mode as physically acceptable.
These decay rates will therefore be the dominant ones since the centre manifold will
dominate the behaviour at late times.

In terms of the original variables, we therefore get the decay rates in $N^+(VI_{-1/9})$ (only dominant decay rates are included): 
\beq
\Sigma_+&\approx & -\frac 34\left(1-\frac{7}{60\tau}\right), \nonumber \\
\Sigma_-&\approx & \frac{\sqrt{3}}4\left(1+\frac{1}{60\tau}\right), \nonumber \\
\Sigma_{12}&\approx & \frac{\sqrt{15}}{60\tau}, \nonumber \\
\Sigma_{13}&\approx & -\frac{(1+16C)\sqrt{5}}{160\tau^2}, \nonumber \\
\Sigma_{23}&\approx & -\frac{(1+16C)}{8\tau^2}, \nonumber \\
N&\approx & \frac{\sqrt{3}}4\left(1+\frac{1}{10\tau}\right), \nonumber \\
\lambda &\approx & \frac{(1+16C)\sqrt{3}}{4\tau}, \nonumber \\
\Omega &\approx & \frac{3}{40\tau}, \nonumber \\
v_1&\approx & -\frac{2}3\left(1+\frac{9}{20\tau}\right), \nonumber \\
v_2&\approx & \frac{\sqrt{5}}3\left(1-\frac{9}{25\tau}\right), \nonumber \\
\sqrt{1-V^2}&\approx & (\sqrt{1-V^2})_0\exp\left[-\frac{5(2\gamma-3)}{3(2-\gamma)}\tau\right].
\eeq
This confirms that the 'extremely tilted' vacuum plane wave  $\widetilde{\mathcal{E}}^+_{p0}(-1/9)$ is the attractor for $N^+(VI_{-1/9})$, $3/2<\gamma<2$. Note, however, that this solution is unstable in the fully tilted space $T(VI_{-1/9})$, which can be verified by calculating, for example, $\widehat{D}'\approx (3/4)\widehat{D}$ close to $\widetilde{\mathcal{E}}^+_{p0}(-1/9)$. 

\subsection{The invariant subspace $N^-(VI_{-1/9})$: non-vortic type VI$_{-1/9}$ models}
\begin{itemize}
\item{} $2/3<\gamma<10/9$: The Collins solution, $\mathcal{C}(-1/9)$. 
\item{} $\gamma=10/9$: The Wainwright solution, $\mathcal{W}$. 
\item{} $10/9<\gamma<2$: The Collinson-French solution, 
$\widetilde{\mathcal{C}\mathcal{F}}$. We have the following refinement for the 
asymptotic tilt (which follows from an analysis of the eigenvalues): 
\begin{itemize}
\item[$\star$] $10/9<\gamma\leq (24-\sqrt{6})/15$: The tilt is asymptotically zero ($\mathcal{C}\mathcal{F}_0$).
\item[$\star$] $(24-\sqrt{6})/15<\gamma\leq 14/9$: The tilt is either asymptotically zero ($\mathcal{C}\mathcal{F}_0$) or extreme ($\widetilde{\mathcal{E}\mathcal{C}\mathcal{F}}_-$).
\item[$\star$] $14/9<\gamma<(24+\sqrt{6})/15$: The tilt is either asymptotically intermediate ($\widetilde{\mathcal{C}\mathcal{F}}_2$) or extreme ($\widetilde{\mathcal{E}\mathcal{C}\mathcal{F}}_-$).
\item[$\star$] $(24+\sqrt{6})/15\leq \gamma<2$: The tilt is asymptotically extreme 
($\widetilde{\mathcal{E}\mathcal{C}\mathcal{F}}_-$).
\end{itemize}
\end{itemize}

\subsection{$T(VI_{-1/9})$: The fully tilted type VI$_{-1/9}$ model}
For the fully tilted model, the late-time behaviour can be summarised as follows: 
\begin{itemize}
\item{} $2/3<\gamma<10/9$: The Collins solution, $\mathcal{C}(-1/9)$. 
\item{} $\gamma=10/9$: The Wainwright solution, $\mathcal{W}$. 
\item{} $10/9<\gamma<2$: The Collinson-French solution, $\widetilde{\mathcal{C}\mathcal{F}}$. We have the following refinement for the asymptotic tilt:
\begin{itemize}
\item[$\star$] $10/9<\gamma\leq (24-\sqrt{6})/15$: The tilt is asymptotically zero  ($\mathcal{C}\mathcal{F}_0$).
\item[$\star$] $(24-\sqrt{6})/15<\gamma\leq 4/3$: The tilt is either asymptotically zero  ($\mathcal{C}\mathcal{F}_0$) or extreme ($\widetilde{\mathcal{E}\mathcal{C}\mathcal{F}}_-$).
\item[$\star$] $4/3<\gamma\leq 5/6+\sqrt{721}/42$: The tilt is either asymptotically intermediate ($\widetilde{\mathcal{C}\mathcal{F}}_{1+}$)  or extreme ($\widetilde{\mathcal{E}\mathcal{C}\mathcal{F}}_-$).
\item[$\star$] $5/6+\sqrt{721}/42<\gamma<\gamma_H\approx 1.47392$: The tilt 
is either asymptotically oscillatory (closed curve) or extreme ($\widetilde{\mathcal{E}\mathcal{C}\mathcal{F}}_-$).
\item[$\star$] $\gamma_H<\gamma<2$: The tilt is asymptotically extreme ($\widetilde{\mathcal{E}\mathcal{C}\mathcal{F}}_-$). 
\end{itemize}
\end{itemize}
Most of these results comes from an analysis of the eigenvalues of the various 
equilibrium points. However, for $\gamma= 5/6+\sqrt{721}/42$ the equilibrium point 
$\widetilde{\mathcal{C}\mathcal{F}}_{1+}$ undergoes a Hopf-bifurcation and 
a stable closed curve results. The analysis of this Hopf-bifurcation is given below. Interestingly, this closed curve co-exists with an extremely tilted attractor. 

\subsubsection{The Hopf-bifurcation}
Our main aim for this  section is to show the following.
\begin{thm} 
There exists a $\gamma_0$ such that for $\frac 56+\frac{\sqrt{721}}{42}<\gamma<\gamma_0$  there exists a closed orbit, $c(\tau)$, acting as the future attractor for a set of non-zero measure of tilted Bianchi type VI$_{-1/9}$ models.
\end{thm}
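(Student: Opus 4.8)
The plan is to apply the Hopf bifurcation theorem to the equilibrium point $\widetilde{\mathcal{CF}}_{1+}$ at the critical value $\gamma_*=\frac56+\frac{\sqrt{721}}{42}$, and then to upgrade the resulting local existence of a periodic orbit to the assertion that it attracts a positive-measure set of models. First I would linearise the full system of evolution equations, restricted to the constraint surface of $T(VI_{-1/9})$, about $\widetilde{\mathcal{CF}}_{1+}$, obtaining the Jacobian $\mathsf{A}(\gamma)$ as a function of the equation-of-state parameter. Using the explicit coordinates of $\widetilde{\mathcal{CF}}_{1+}$ recorded above, I would compute the characteristic polynomial and track how its roots move as $\gamma$ increases through $\gamma_*$.

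The key spectral input is to establish that at $\gamma=\gamma_*$ exactly one complex-conjugate pair $\alpha(\gamma)\pm i\beta(\gamma)$ lies on the imaginary axis, i.e. $\alpha(\gamma_*)=0$ with $\beta(\gamma_*)\neq0$, while every remaining eigenvalue has strictly negative real part. The value $\gamma_*$ is precisely the root of the relevant Routh--Hurwitz-type condition that signals such a crossing, which is how that particular algebraic number enters the statement. I would then verify the transversality condition $\alpha'(\gamma_*)\neq0$ by differentiating the characteristic polynomial implicitly. Together these facts guarantee, via the Hopf bifurcation theorem, that a one-parameter family of periodic orbits bifurcates from $\widetilde{\mathcal{CF}}_{1+}$ at $\gamma_*$.

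To show that the bifurcating cycle is stable, and hence attracting within the centre manifold, I would restrict the flow to the two-dimensional critical eigenspace and compute the first Lyapunov coefficient $\ell_1(\gamma_*)$. This requires expanding the vector field to third order at the equilibrium, projecting the quadratic form $\mathsf{C}(\cdot,\cdot)$ and the cubic terms onto the critical eigenspace, and assembling them via the standard normal-form formula. If $\ell_1(\gamma_*)<0$ the bifurcation is supercritical, producing a stable limit cycle on the side $\gamma>\gamma_*$; this fixes the upper endpoint $\gamma_0$ as the largest value up to which the Hopf cycle persists and retains its stability. Finally, because the transverse eigenvalues are hyperbolic with negative real part, the centre-manifold reduction is itself attracting in the full state space, so asymptotic stability of the cycle \emph{on} the centre manifold promotes to local asymptotic stability in $T(VI_{-1/9})$; the cycle therefore attracts an open neighbourhood, and an open set has positive measure, which yields the ``non-zero measure'' conclusion.

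The main obstacle I anticipate is the computation of the first Lyapunov coefficient. The equations involve eleven variables subject to five constraints, so even after reduction the bilinear and cubic data are unwieldy, and the sign of $\ell_1$ at $\gamma_*$ is the delicate quantity on which the entire ``stable closed curve'' claim rests; in practice this step will demand careful symbolic computation. A secondary difficulty is confirming that none of the remaining eigenvalues accidentally approaches the imaginary axis near $\gamma_*$, since it is exactly this spectral gap that justifies treating the event as a clean two-dimensional Hopf bifurcation embedded in an otherwise hyperbolic, attracting normal bundle.
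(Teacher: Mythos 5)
Your proposal is correct in substance and reaches the theorem by a legitimate route, but it is not quite the route the paper takes, and the difference is worth noting. The paper does not run the $n$-dimensional Hopf machinery at $\widetilde{\mathcal{CF}}_{1+}$ in the full seven-dimensional constrained state space. Instead it first identifies an explicit two-dimensional invariant subset — the Collinson--French vacuum geometry with the two tilt degrees of freedom, coordinatised by $(X,Y)=(v_1,v_2^2)$ — which contains $\widetilde{\mathcal{CF}}_{1+}$ and carries the critical eigenpair, and therefore serves as the centre manifold. The entire cubic normal-form computation (your first Lyapunov coefficient; the paper's $\mathrm{Re}(b)\approx-0.029$) is then done in this genuinely planar system, which is what makes the otherwise unwieldy symbolic work you rightly worry about tractable. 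The second difference is in how transverse stability is handled: you rely on the transverse eigenvalues at the equilibrium being hyperbolic and attracting, so that stability on the centre manifold promotes to stability in the full space for $\gamma$ near $\gamma_*$; the paper instead proves a separate averaging result over the closed orbit itself, showing $\av{X}=-\sqrt{6}(3\gamma-4)/[2(3-\gamma)]$ and $\av{\lambda_{\Omega}}=-5(5\gamma-6)/[3(3-\gamma)]<0$, so that $\Omega$ decays along the cycle and the curve is stable against matter perturbations, with stability against the remaining (vacuum) perturbations inherited from the Collinson--French point. Your argument buys a cleaner appeal to the standard bifurcation theorem but is only valid infinitesimally close to $\gamma_*$ and requires the full Jacobian spectrum; the paper's averaging argument is a property of the closed orbit wherever it exists, not just of the linearisation at the equilibrium, and sidesteps the heavy centre-manifold projection in seven dimensions. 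Both establish the stated theorem; neither has a gap, though you should be aware that the ``positive measure'' set in the conclusion necessarily consists of $\Omega>0$ orbits attracted to a cycle lying \emph{on} the vacuum boundary $\Omega=0$, which is exactly why the sign of the $\Omega$-transverse exponent is the decisive quantity.
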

To prove this theorem we will first show the existence of a closed period orbit 
acting as an attractor in a particular subset. We consider the invariant subset given by the Collinson-French solution with 2 tilts. We introduce $(X,Y)=(v_1,v_2^2)$:
\beq 
X'&=&\left(T-\frac 23\right)X-\frac{2\sqrt{6}}{3}Y, \nonumber \\
Y'&=&2\left(T+\frac {2\sqrt{6}}3X\right)Y. 
\label{2dsys}\eeq 
We set $X_0=-\frac{\sqrt{6}(3\gamma-4)}{2(3-\gamma)}$, 
$Y_0=\frac{5(3\gamma-4)(3-2\gamma)}{2(3-\gamma)^2}$, and perform the 
transformation $x=X-X_0$, $y=Y-Y_0$ with respect to the equilibrium point $\widetilde{\mathcal{C}\mathcal{F}}_{1+}$. 

The normal form of a Hopf-bifurcation can be written 
\beq Z'=(\lambda+b|Z|^2)Z, \eeq 
where $b$ is some complex number and $\lambda=\alpha+i \beta$ is a parameter. 
If $\mathrm{Re}(b)<0$ for $\alpha=0$, there exists a stable closed orbit for 
$0<\alpha$ sufficently small. 

We will therefore set $\gamma=\frac 56+\frac{\sqrt{721}}{42}$ and expand to cubic terms in $x$ and $y$. It is also convenient to introduce a complex variable $z$ chosen such that it aligns with the Jordan form of the linearised matrix. This can be achieved by setting:
\[ z=x+iay,\] 
where $a$ is a real number chosen such that the linear term of $z'=f(z,\bar{z})$ is 
\beq
\partial_zf(0,0)=\frac{i}{6}\sqrt{-1205+45\sqrt{721}}, \quad  \partial_{\bar{z}}f(0,0)=0.
\eeq
By a transformation
\[ Z=z+a_{11}z^2+a_{12}z\bar{z}+a_{22}\bar{z}^2+a_{111}z^3+a_{112}z^2\bar{z}+a_{122}z\bar{z}^2+a_{222}\bar{z}^3, \]
 we can choose the coefficients $a_{ij}$ and $a_{ijk}$ so that the equation for $Z$ takes the form 
\beq Z'=(\lambda+b|Z|^2)Z +\mathcal{O}(|Z|^4), 
\eeq 
where 
\[ \lambda=\frac{i}{6}\sqrt{-1205+45\sqrt{721}},\quad \mathrm{Re}(b)=\frac{1}{291600}\left(-15990233+595193\sqrt{721}\right)\approx -0.029.\]
Hence, there exists a $\gamma_0$ such that there exists a closed stable orbit for $\frac 56+\frac{\sqrt{721}}{42}<\gamma<\gamma_0$. 

The next step is to show that this orbit is also stable in the fully tilted type 
VI$_{-1/9}$ models. We can show this as follows: for a function $B$, we introduce 
\emph{the average}, $\av{B}$, with respect to the closed orbit $c(\tau)$, defined by
\[ \av{B}=\frac{1}{T}\oint_{c(\tau)}B\d \tau, \quad T=\oint_{c(\tau)}\d\tau.\] 
We can use this average, using similar manipulations as for the closed curves in the type IV, VI$_h$ and VII$_h$ models \cite{HHC,HHLC2} to show that: 
\begin{thm} 
Assume that there exists a closed periodic orbit $c(\tau)$ for the dynamical system (\ref{2dsys}). Then 
\[ \av{X}=-\frac{\sqrt{6}(3\gamma-4)}{2(3-\gamma)}, \quad \av{\lambda_{\Omega}}=-\frac{5(5\gamma-6)}{3(3-\gamma)}.\]
\end{thm}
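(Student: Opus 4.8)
The plan is to work entirely on the invariant subset that carries the closed orbit: the Collinson--French vacuum geometry (all frame and curvature variables frozen at their $\mathcal{CF}$ values, $\Omega=0$) together with the two tilt variables $(X,Y)=(v_1,v_2^2)$, $v_3=0$, evolving by (\ref{2dsys}). First I would record that this set is genuinely invariant: with $\Omega=0$ the shear/curvature equations decouple from the tilt and $\mathcal{CF}$ is their rest point, while the constraints (\ref{const:v1})--(\ref{const:group}) hold because $N-\sqrt3A=0$ and $\Sigma_{12}=\lambda=0$ at $\mathcal{CF}$. On this set $q=\tfrac23$, $\Sigma^2=\tfrac13$, $A=1/\sqrt6$, $N=1/\sqrt2$, $\Sigma_{13}=\sqrt{15}/9$ are constants, and $\mathcal{S}=\tfrac{2X^2}{3V^2}$, $Av_1=X/\sqrt6$, $V^2=X^2+Y$. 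The only tool is the elementary averaging identity $\av{g'}=0$ for any $g$ smooth and periodic along $c(\tau)$; in particular $\av{(\ln f)'}=0$ for any $f$ smooth, positive and periodic on the orbit.

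Next I would extract two ``internal'' relations. Since $v_2=\sqrt Y>0$ is periodic, $\av{(\ln v_2)'}=0$; from (\ref{2dsys}) one has $(\ln v_2)'=T+\tfrac{2\sqrt6}{3}X$, so $\av T=-\tfrac{2\sqrt6}{3}\av X$. A cleaner second identity comes from $V^2$: a short manipulation of the $V'$--equation against the definition of $T$ gives the exact relation $(\ln V^2)'=2(T-\mathcal S)$, whence $\av{\mathcal S}=\av T=-\tfrac{2\sqrt6}{3}\av X$. These two facts, however, only tie $\av T$, $\av{\mathcal S}$ and $\av X$ together; by themselves they cannot pin down $\av X$, since every purely internal logarithmic derivative either reintroduces $T$ or produces awkward moments such as $\av{V^2/X}$.

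The decisive input is the two monotone functions of Section \ref{sect:Qual}. Writing $\lambda_\Omega=\Omega'/\Omega$ and $\lambda_{\widehat D}=\widehat D'/\widehat D$ -- both smooth along $c(\tau)$ even though $\Omega=\widehat D=0$ there, because $\Omega'$ and $\widehat D'$ carry explicit factors of $\Omega$, resp.\ $\widehat D$ -- I would read the rate forms of $Z_1$ and $Z_2$ as identities among logarithmic derivatives. For $Z_1=\alpha\Omega^{1-\gamma}$ this is $(\ln\alpha)'+(1-\gamma)\lambda_\Omega=2(1-\gamma)q+(2-\gamma)+\gamma\mathcal S$; averaging, and using $\av{(\ln\alpha)'}=0$ with $\av q=\tfrac23$, yields $(1-\gamma)\av{\lambda_\Omega}=\tfrac43(1-\gamma)+(2-\gamma)+\gamma\av{\mathcal S}$. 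For $Z_2$, since $A,N$ are constant and $\widehat D'/\widehat D=-\tfrac23+\sqrt6\,X$ on this set, averaging kills $\av{(\ln G_+)'}$ and $\av{(\ln(1-V^2))'}$ and leaves $5\av{\lambda_\Omega}=2\av{\lambda_{\widehat D}}-(5\gamma-6)\av{3-2Av_1}$, a second \emph{linear} relation between $\av{\lambda_\Omega}$ and $\av X$.

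Finally I would substitute $\av{\mathcal S}=-\tfrac{2\sqrt6}{3}\av X$ into the $Z_1$ relation and $\lambda_{\widehat D}=-\tfrac23+\sqrt6\,X$, $2Av_1=\tfrac{\sqrt6}{3}X$ into the $Z_2$ relation, obtaining a $2\times2$ linear system in $\av X$ and $\av{\lambda_\Omega}$. Solving it returns exactly $\av X=-\tfrac{\sqrt6(3\gamma-4)}{2(3-\gamma)}$ and $\av{\lambda_\Omega}=-\tfrac{5(5\gamma-6)}{3(3-\gamma)}$. I expect the main obstacle to be conceptual rather than computational: recognising that the internal averages \emph{under-determine} the problem and that the missing equations must be supplied by $Z_1$ and $Z_2$, which -- degenerating to $0$ or $\infty$ on the vacuum orbit -- can be used only through their logarithmic-derivative (rate) form, a step justified by the smoothness of $\lambda_\Omega$ and $\lambda_{\widehat D}$ along $c(\tau)$. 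The remaining work (establishing $(\ln V^2)'=2(T-\mathcal S)$ and performing the final linear solve) is routine.
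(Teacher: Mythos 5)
Your argument is correct and the final linear solve does reproduce both stated averages (I checked the algebra: eliminating $\av{\lambda_\Omega}$ between your $Z_1$ and $Z_2$ relations, together with $\av{\mathcal S}=\av T=-\tfrac{2\sqrt6}{3}\av X$, gives $\tfrac{\sqrt6}{3}(3-\gamma)\av X=-(3\gamma-4)$ and then $\av{\lambda_\Omega}=-\tfrac{5(5\gamma-6)}{3(3-\gamma)}$). However, the route you take to close the system differs from the paper's, and your justification for taking it rests on a false premise. You assert that the internal averages under-determine $\av X$ because ``every purely internal logarithmic derivative either reintroduces $T$ or produces awkward moments.'' That is not so: since $1-(\gamma-1)V^2=(1-V^2)+(2-\gamma)V^2$, the combination
\begin{equation*}
\tfrac12\bigl(\ln V^2\bigr)'-\tfrac{2-\gamma}{2}\bigl(\ln(1-V^2)\bigr)'=(3\gamma-4)-2(\gamma-1)Av_1-\mathcal S
\end{equation*}
is an exact identity along the orbit, and averaging it gives $\av{\mathcal S}=(3\gamma-4)-2(\gamma-1)A\av X$ directly. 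This is precisely the paper's ``manipulation of the $V$ equation'': combined with $\av T=\av{\mathcal S}$ and the $Y$-equation relation it pins down $\av X$ using only the tilt sector, after which a similar manipulation of the $\Omega$ equation gives $\av{\lambda_\Omega}$. Your alternative — importing the rate forms of $Z_1$ and $Z_2$, i.e.\ effectively the $\Omega$ and $\widehat D$ growth rates, and solving a $2\times2$ system — is legitimate (the rates $\lambda_\Omega=\Omega'/\Omega$ and $\lambda_{\widehat D}=\widehat D'/\widehat D$ are indeed well defined on the vacuum orbit as the explicit coefficient functions, and the $Z_1$, $Z_2$ identities extend there by continuity), and it has the mild virtue of producing $\av{\lambda_\Omega}$ simultaneously with $\av X$; but it is more machinery than needed, and you should delete the claim that the direct route is unavailable.
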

\begin{proof}
From the $Y$ equation, we get $\av{T}=-2\sqrt{6}\av{X}/3$. A manipulation of the $V$ 
equation yields $\av{T}=\av{\mathcal{S}}$, and 
$\av{\mathcal{S}}=(3\gamma-4)-2(\gamma-1)A\av{X}$. 
These can now be solved to yield the desired value for $\av{X}$. 
A similar manipulation of the 
$\Omega$ equation yields $\av{\lambda_{\Omega}}$. 
\end{proof}
This implies that the vacuum solution is stable when the closed curve is 
perturbed by $\Omega$. Hence, since the Collinson-French solution is 
stable with respect to vacuum perturbations, this closed curve is stable for 
the fully tilted type VI$_{-1/9}$ models whenever  
$\frac 56+\frac{\sqrt{721}}{42}<\gamma<\gamma_0$. 

It remains to determine the maximal value for $\gamma_0$? It seems that this 
limiting value is related to the existence of a heteroclinic orbit originating and ending at the  
saddle points $\mathcal{C}\mathcal{F}_0$ and  $\widetilde{\mathcal{C}\mathcal{F}}_2$, respectively. 
For the limiting value of $\gamma$, which we will call $\gamma_H$, there exists 
such a heteroclinic orbit, while for values $\gamma\neq \gamma_H$ no such 
heteroclinic orbit exists connecting these two equilibrium points. 
We can use this to numerically estimate the value for $\gamma_H$. Our estimate gives:
\[ 1.473920<\gamma_H<1.473921.\] 
Since $5/6+\sqrt{721}/42=1.472653409...$ this means that the region in which the 
closed orbit exists is extremely small.\footnote{However, this seems to be 
typical for these models, the loophole for the type IV and VII$_h$ models also 
appears to be extremely small.}

Note also that this implies that these closed orbits co-exist with the extremely 
tilted attractor $\widetilde{\mathcal{E}\mathcal{C}\mathcal{F}}_-$, which makes 
this closed orbit even more difficult to detect. In addition, it appears as if the 
curves asymptote to the closed curve relatively slowly, which implies that the 
numerics have to run for a relatively long time in order to see the late-time asymptote. A numerical plot of some generally tilted type VI$_{-1/9}$ models approaching this closed orbit is shown in Fig.\ref{FigCC}.
\begin{figure}[f]
\caption{Bianchi type VI$_{-1/9}$ universes  approaching a closed curve and the equilibrium point $\widetilde{\mathcal{E}\mathcal{C}\mathcal{F}}_-$ ($\gamma=1.4735$). }\label{FigCC}
$$\includegraphics[scale=0.5,angle=90]{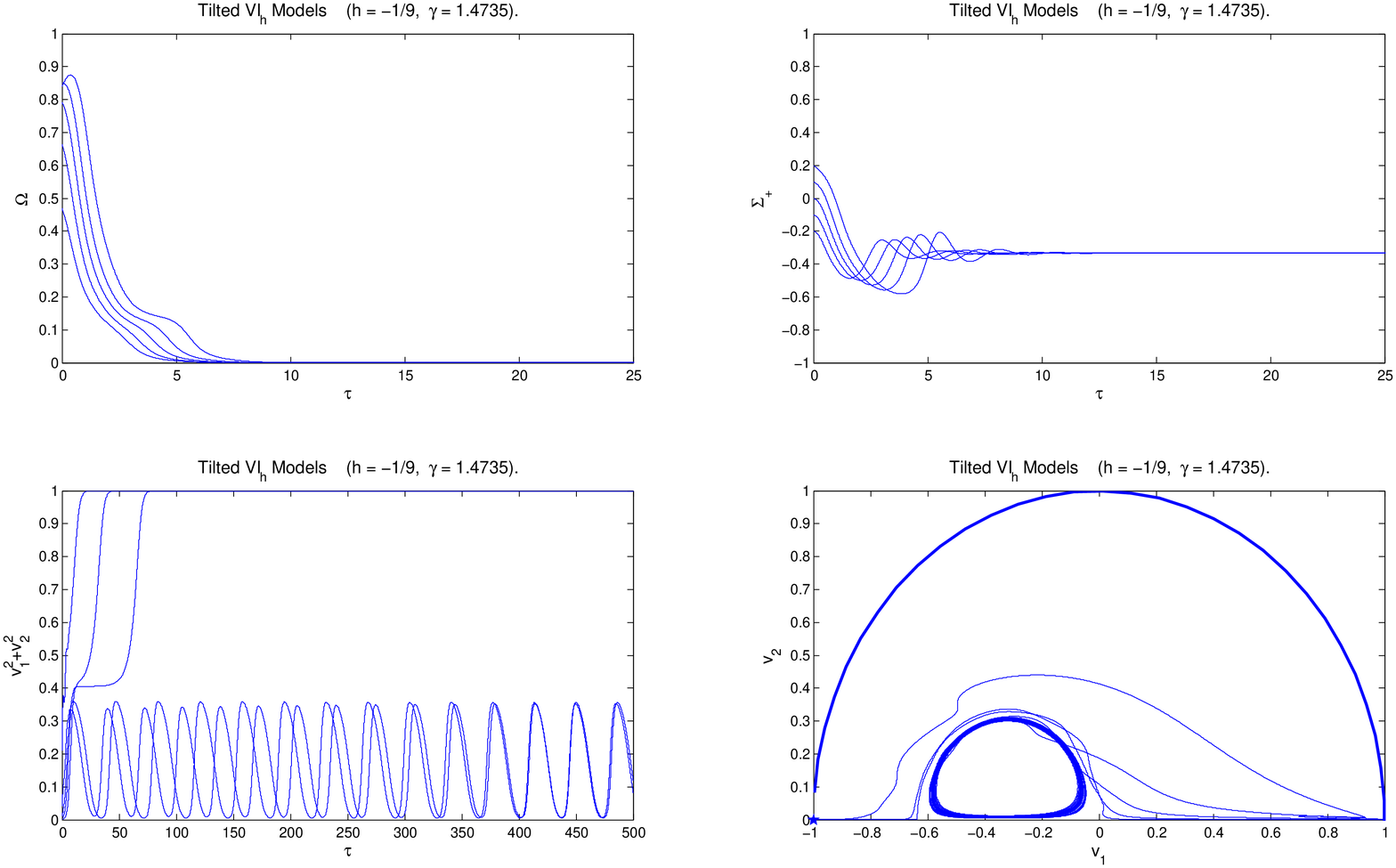}$$
\end{figure}

\section{Conclusion} 
In this paper we have examined the tilted Bianchi type VI$_{-1/9}$ model
in some detail.
This model is a special case (due to the vanishing of one of the constraint equations)
and necessitates a separate analysis from the general type
VI$_h$ models.  We showed that in these models there exists a tiny region of
parameter space where there exists a closed curve acting as an attractor.  This closed
curve co-exists with an extremely tilted attractor.
In the case of the most general irrotational models this closed curve, 
which appears in terms of the tilt
velocities, is absent.
We have also confirmed the
analytical results with an extensive numerical investigation.

\newpage

\end{document}